\renewcommand{\@algocf@capt@plain}{above}
\newtheorem{theorem}{Theorem}[section]
\newtheorem{definition}{Definition}[section]
\newtheorem{assumption}{Assumption}[section]
\newtheorem{lemma}[theorem]{Lemma}
\title{{\LARGE \bf Risk assessment and optimal allocation of security measures under stealthy false data injection attacks}}
\author{
Sribalaji C. Anand$^{1}$, Andr\'e M. H. Teixeira$^{2}$, and Anders Ahl\'en$^{1}$
\thanks{*This work is supported by the Swedish Research Council under the grant 2018-04396 and by the Swedish Foundation for Strategic Research.}
\thanks{$^{1}$ Sribalaji C. Anand and Anders Ahl\'en are with the Department of Electrical Engineering, Uppsala University, PO Box 65, SE-75103, Uppsala, Sweden. {\tt\small sribalaji.anand@angstrom.uu.se}}%
\thanks{$^{2}$ Andr\'e M. H. Teixeira is with the Department of Information Technology, Uppsala University, PO Box 337, SE -75105, Uppsala, Sweden. {\tt\small andre.teixeira@it.uu.se}}%
}
\begin{document}
\maketitle
\thispagestyle{empty}
\pagestyle{empty}
\begin{abstract}
This paper firstly addresses the problem of risk assessment under false data injection attacks on uncertain control systems. We consider an adversary with complete system knowledge, injecting stealthy false data into an uncertain control system. We then use the Value-at-Risk to characterize the risk associated with the attack impact caused by the adversary. The worst-case attack impact is characterized by the recently proposed output-to-output gain. We observe that the risk assessment problem corresponds to an infinite non-convex robust optimization problem. To this end, we use dissipative system theory and the scenario approach to approximate the risk-assessment problem into a convex problem and also provide probabilistic certificates on approximation. Secondly, we consider the problem of security measure allocation. We consider an operator with a constraint on the security budget. Under this constraint, we propose an algorithm to optimally allocate the security measures using the calculated risk such that the resulting Value-at-risk is minimized. Finally, we illustrate the results through a numerical example. The numerical example also illustrates that the security allocation using the Value-at-risk, and the impact on the nominal system may have different outcomes: thereby depicting the benefit of using risk metrics. 
\end{abstract}


\section{Introduction}
Critical infrastructures describe the assets that are vital for the normal operation of society. In general, control systems are an integral part of critical infrastructures. Examples include pH control systems in a bioreactor, frequency control of power generating systems, etc. Due to the vitality of their operation, and partly due to advances in technology, these control systems are monitored regularly through wireless digital communication channels \cite{samad2007system}. And due to the increased use of non-secure communication channels, the control systems are prone to cyber-attacks such as the attack on the Ukrainian power grid, and the Kemuri cyber attack to name a few \cite{hemsley2018history}. Thus there is an increased research interest in the cyber-security of control systems \cite{dibaji2019systems}. 

One of the common recommendations for improving the security of control systems is to follow the risk management cycle: Risk assessment, risk response, and risk monitoring \cite{milovsevic2020security}. The risk assessment step involves careful consideration of risk sources, their likelihood, and their consequences. The consequence can be quantified in terms of impact which can be obtained through simulation or optimization-based methods. The risk response step involves implementing additional measures to minimize the risk if and when necessary. The risk response step can involve either $(i)$ re-designing of the system controller/detectors to be robust against attacks, or $(ii)$ allocating additional security measures such as encrypted communication channels. Finally, the risk monitoring step involves constant monitoring of the risk at acceptable intervals of time. This paper studies the risk assessment and risk response step of the risk management cycle. 

Although the risk assessment and the risk response steps have been studied in the literature \cite[Chapter 2]{milovsevic2020security}, there are some research gaps that are outlined next. Firstly, the majority of the literature considers a deterministic system \cite{milovsevic2019estimating,murguia2020security}. Secondly, the risk frameworks are mostly application-specific. For instance, \cite{CVAR1,5164937} and \cite{liu2016microgrid} determine the risk of cyber-attacks on automatic generation control, power systems, and energy storage systems in smart grids respectively.

To address these limitations, we consider the following setup. A discrete-time (DT) linear time-invariant (LTI) process with uncertainties, an output feedback controller, and an anomaly detector. A stealthy adversary with complete system knowledge injects false data into the sensor or actuator channels for a long but finite amount of time. With this setup, we provide the following contributions.
\begin{enumerate}
    \item We formulate the risk assessment problem using the Value-at-Risk (VaR) as a risk metric and the Output-to-Output Gain (OOG) as an impact metric.
    \item We observe that the risk-assessment problem is NP-hard in general. To this end, we propose an approximate risk assessment problem that is computationally tractable.
    \item We show that the approximate risk assessment problem can be solved by an equivalent convex semi-definite program (SDP). We provide the necessary and sufficient conditions for the (approximate) risk to be bounded.
    \item We provide a preliminary algorithm to optimally allocate security measures using the calculated risk. 
    \item We numerically illustrate that the security measure allocation using the Value-at-risk, and the impact on the nominal system may have different outcomes. We thereby depict the advantage of using risk metrics as suggested in this paper. 
\end{enumerate}

The remainder of the paper is organized as follows. The control system and the adversary are described in Section \ref{description}. The risk assessment problem (RAP) is formulated in Section \ref{PF}. We approximate the RAP and convert it to a convex SDP in Section \ref{Robust o2o}. In Section \ref{sec_allocate} we formulate the security measure allocation problem (SMAP) and provide an algorithm which solves the SMAP for small-scale systems. The results are illustrated through a numerical example in Section \ref{Example}. Finally, we conclude the paper in Section \ref{Conclusion}.

\textit{Notation:} Throughout this paper, $\mathbb{R}, \mathbb{C}, \mathbb{Z}$ and $\mathbb{Z}^{+}$ represent the set of real numbers, complex numbers, integers and non-negative integers respectively. A positive semi-definite matrix $A$ is denoted by $A \succeq 0$. Let $x: \mathbb{Z} \to \mathbb{R}^n$ be a discrete-time signal with $x[k]$ as the value of the signal $x$ at the time step $k$. Let the time horizon be $[0,N]=\{ k \in \mathbb{Z}^+|\; 0 \leq k\leq N \}$. The $\ell_2$-norm of $x$ over the horizon $[0,N]$ is represented as $|| x ||_{\ell_2, [0,N]}^2 \triangleq \sum_{k=0}^{N} x[k]^Tx[k]$. Let the space of square summable signals be defined as $\ell_2 \triangleq \{ x[k]: \mathbb{Z}^+ \to \mathbb{R}^n |\; ||x||^2_{\ell_2, [0,\infty]} < \infty\}$ and the extended signal space be defined as $\ell_{2e} \triangleq \{ x[k]: \mathbb{Z}^+ \to \mathbb{R}^n | \;||x||^2_{[0,N]} < \infty, \forall N \in \mathbb{Z}^+ \}$. For the sake of simplicity, we represent $||x||^2_{\ell_2,[0,\infty]}$ as $||x||^2_{\ell_2}$. For $x \in \mathbb{R}, \left \lceil {x} \right \rceil$ represents the nearest integer $ \geq x$. For any finite set $\mathcal{Q}$, and element of $\mathcal{Q}$ is represented by $q_{(\cdot)}$, and the cardinality of the set is represented by $|\mathcal{Q}|$. 
\section{Problem background}\label{description}
In this section, we describe the control system structure and the goal of the adversary. Consider a closed-loop DT LTI system with a process ($\mathcal{P}$), output feedback controller ($\mathcal{C}$) and an anomaly detector ($\mathcal{D}$) represented by
\begin{align}
    \mathcal{P}: & \left\{
                \begin{array}{ll}
                    x_p[k+1] &= A^{\Delta}x_p[k] + B^{\Delta} \tilde{u}[k]\\
                    y[k] &= C^{\Delta}x_p[k]\\
                    y_p[k] &= C_J^{\Delta}x_p[k] +D_J^{\Delta} \tilde{u}[k]
                \end{array}
                \right. \label{P}\\
    \mathcal{C}: & \left\{
                 \begin{array}{ll}
                     z[k+1] &= A_cz[k]+ B_c\tilde{y}[k]\\
                     u[k] &= C_cz[k] + D_c\tilde{y}[k]
                \end{array}
                \right. \label{C}\\
    \mathcal{D}: & \left\{
                \begin{array}{ll}
                s[k+1] &= A_es[k] +B_e u[k] + K_e\tilde{y}[k]\\
                y_r[k] &= C_es[k] +D_eu[k] + E_e \tilde{y}[k]
                \end{array}
                \right. \label{D}
\end{align}
where  $A^{\Delta} \triangleq A + \Delta A(\delta)$ with $A$ representing the nominal system matrix and $\delta \in \Omega$. Additionally we assume $\Omega$ to be closed,  bounded and to include the zero uncertainty yielding $\Delta A(0) = 0$. The other matrices are similarly expressed. The state of the process is represented by $x_p[k] \in \mathbb{R}^{n_x}$, $z[k] \in \mathbb{R}^{n_z}$ is the state of the controller, $s[k] \in \mathbb{R}^{n_s}$ is the state of the observer, $\tilde{u}[k] \in \mathbb{R}^{n_u}$ is the control signal received by the process, $ u[k] \in \mathbb{R}^{n_u}$ is the control signal generated by the controller, $y[k] \in \mathbb{R}^{n_m}$ is the measurement output produced by the process, $\tilde{y}[k] \in \mathbb{R}^{n_m}$ is the measurement signal received by the controller and the detector, $y_p[k] \in \mathbb{R}^{n_p}$ is the virtual performance output, and $y_r[k] \in \mathbb{R}^{n_r}$ is the residue generated by the detector. The closed-loop system is also shown in Fig. \ref{System}. The reason to adopt uncertainty only in the process is that the parameters of the controller and the detector are chosen by the system operator. However the parameters of the process may not be known to the operator due to a variety of reasons such as, e.g. modelling errors.

In general, the system is considered to have a good performance when the energy of the performance output $||y_p||_{\ell_2}^2$ is small and an anomaly is considered to be detected when the detector output energy $||y_r||_{\ell_2}^2$ is greater than a predefined threshold, say $\epsilon_r$. Without loss of generality (w.l.o.g.), we assume $\epsilon_r = 1$ in the sequel.
\begin{figure}
    \centering
    \includegraphics[width=8.4cm]{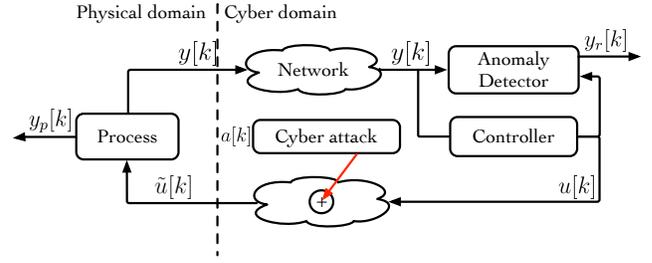}
    \caption{Control system under data injection attack}
    \label{System}
\end{figure}
\subsection{Data injection attack scenario}\label{Attack scenario:sec}
In the closed-loop system described in \eqref{P}-\eqref{D}, we consider that an adversary is injecting false data into the sensors or actuators of the plant but not both. Given this setup, we now discuss the resources the adversary has access to.
\subsubsection{Disruption and disclosure resources}\label{disclosure:sec} 
The adversary can access (eavesdrop) the control and sensor channels and can inject data. This is represented by $\begin{bmatrix}
\tilde{u}^T[k] & \tilde{y}^T[k]
\end{bmatrix}^T=$
\[\begin{bmatrix}
{u}[k]\\
{y}[k]
\end{bmatrix}+\begin{bmatrix}
B_a\\
F_a
\end{bmatrix} a[k], \left[\begin{array}{c|c}
B_a^T & D_a^T
\end{array}\right] \triangleq \left[\begin{array}{c|c}
E_a^T & 0\\
0^T & F_a^T
\end{array}\right]\] 
where $a[k] \in \mathbb{R}^{n_a}$ is the data injected by the adversary. The matrix $E_a (F_a)$ is a diagonal matrix with $E_a(i,i)=1 \;(F_a(i,i)=1)$, if the actuator (sensor) channel $i$ is under attack and zero otherwise. 
\subsubsection{System knowledge} 
We assume that, the system operator knows the bounds of the set $\Omega$ and the nominal system matrices. Next, we assume that the adversary has full system knowledge, i.e., s/he knows the system matrices \eqref{P},\eqref{C}, and \eqref{D}. In reality, it is hard for the adversary to know the system matrices, but this assumption helps to study the worst case.

Defining ${x}[k] \triangleq [ x_p[k]^T \; z[k]^T \; s[k]^T]^T$, the closed-loop system under attack with the performance output and detection output as system outputs becomes
\begin{equation}\label{system:CL:uncertain}
                \begin{array}{ll}
                            {x}[k+1] &= {A}_{cl}^{\Delta}{x}[k] + {B}_{cl}^{\Delta}a[k]\\
                            y_p[k] &= {C}_p^{\Delta}{x}[k] + {D}_p^{\Delta} a[k]\\
                            y_r[k] &= {C}_r^{\Delta}{x}[k] + {D}_r^{\Delta} a[k],\\
                \end{array}
\end{equation}
where the nominal matrices are given by $\left[ \begin{array}{c|c}
A_{cl} & B_{cl}
\end{array}\right] \triangleq $
\[\left[
\begin{array}{c|c}
\begin{matrix}
    A+BD_cC & BC_c & 0\\
    B_cC & A_c & 0\\
    (B_eD_c +K_e)C & B_eC_c & A_e
    \end{matrix} & \begin{matrix}
    BB_a + BD_cD_a \\ B_cD_a \\ (B_eD_c+K_e)D_a 
    \end{matrix}
\end{array}\right]
\]
\begin{align*}
    {C}_p &\triangleq \begin{bmatrix}
    C_J+D_JD_cC & D_JC_c & 0
    \end{bmatrix}\\
    {C}_r &\triangleq \begin{bmatrix}
    (D_eD_c + E_e)C & D_eC_c & C_e
    \end{bmatrix}\\
    {D}_p & \triangleq D_J(D_cD_a+B_a), {D}_r \triangleq (D_eD_c +E_e)D_a.
\end{align*}
In this paper, we consider the adversarial setup where the adversary is omniscient.
\begin{definition}[Omniscient adversary]\label{OA}
An adversary is defined to be omniscient if it knows the matrices in \eqref{system:CL:uncertain}.$\hfill\triangleleft$
\end{definition}

In reality, it is hard for an adversary to know the system matrices of \eqref{system:CL:uncertain} due to the uncertainty. Thus, such an adversarial setup is far from reality but it can help us study a worst-case scenario. For clarity, we assume the following.
\begin{assumption}\label{assume_stable}
The control system \eqref{system:CL:uncertain} is stable $\forall \delta \in \Omega$.$\hfill\triangleleft$
\end{assumption}
\begin{assumption}\label{Ba}
The input matrix has full column rank i.e., $ \nexists \;s \in \mathbb{R}^{n_a} \neq 0$ such that $B_{cl}^\Delta s =0$.$\hfill\triangleleft$
\end{assumption}
\subsubsection{Attack goals and constraints.}
Given the resources the adversary has access to, it aims at disrupting the system's behavior while staying stealthy. The system disruption is evaluated by the increase in energy of the performance output whereas, the adversary is stealthy if the energy of the detection output is below a predefined threshold ($\epsilon_r$). We discuss the attack policy for a deterministic system next.
\subsection{Optimal attack policy for the nominal system}
From the previous discussions, it can be understood that the goal of the adversary is to maximize the performance cost while staying undetected. When the system \eqref{system:CL:uncertain} is deterministic, \cite{teixeira2015strategic} formulates that the attack policy of the adversary as the following non-convex optimization problem
\begin{equation}\label{opti:o2o:uncertain}
\begin{aligned}
||\Sigma||_{\ell_{2e},y_p \leftarrow y_r}^2 \triangleq \sup_{a \in \ell_{2e}} & ||y_p||^2_{\ell_{2}} \\
\textrm{s.t.} & ||y_r||_{\ell_2}^2 \leq 1, x[0]=0,x[\infty]=0,
\end{aligned}
\end{equation}
where $||\Sigma||_{\ell_{2e},y_p \leftarrow y_r}^2$ represents the $OOG$ that characterizes the disruption caused by the attack signal $a$. In \eqref{opti:o2o:uncertain}, the constraint $x[0]=0$ is introduced because the system is assumed to be at equilibrium before the attack. 
\begin{assumption}\label{equil_begin}
The closed-loop system \eqref{system:CL:uncertain} is at equilibrium $x[0]=0$ before the attack commences.$\hfill\triangleleft$
\end{assumption}

We also assume that the adversary has finite amount of energy (similar to $H_{\infty}$ control). Thus, the adversary does not attack the system for an infinite amount of time but stops after a very long time, say $T$. And since the attack stops, the state is brought back to equilibrium. To this end, we introduce the constraint $x[\infty]=0$ in \eqref{opti:o2o:uncertain}.

In the literature, such characterization of the impact of stealthy attacks \eqref{opti:o2o:uncertain} has only been studied for fully known deterministic systems, but not for an uncertain system. Thus, the first goal of the paper is to quantify the impact in terms of risk on the uncertain system \eqref{system:CL:uncertain}. We later describe, in Section \ref{sec_allocate}, as to how the attack impact determined can be used for the benefit of the system operator. 
\section{Problem Formulation}\label{PF}
To quantify the risks of data injection attacks on an uncertain control system, we start by defining a random variable that characterizes the impact as a function of the system uncertainty and the attack vector.

\begin{definition}[Impact random variable]\label{RV}
Let the random variable $X^A(\cdot)$ be defined as
\begin{equation}\label{newimpact}
\begin{aligned}
X^A(a,\delta) \triangleq & \;\Vert y_p(\delta)\Vert_{\ell_2}^2 \times  \mathbb{I}\bigg( \Vert y_r(\delta)\Vert_{\ell_2}^2 \leq 1,x({\delta})[\infty]=0 \bigg)
\end{aligned}
\end{equation}
where $X^A(\cdot)$ is the impact caused on the system \eqref{system:CL:uncertain} with the uncertainty $\delta \in \Omega$ by the attack vector $a \in  \ell_{2e}$, $\mathbb{I}$ is the indicator function,  $y_p(\delta), y_r(\delta)$ and $x({\delta})$ are the performance, residue output and state of the system with the isolated uncertainty $\delta$. Here, the signals $y_p(\delta), y_r(\delta)$ and $x(\delta)$ are also functions of the attack vector $a$.$\hfill\triangleleft$
\end{definition}

With the random variable defined in \textit{Definition \ref{RV}}, we next formulate the risk assessment problem. Consider the data injection attack scenario where the parametric uncertainty $\delta \in \Omega$ of the system is known to the adversary but not to the system operator. The system operator has knowledge only about the bounds of the set $\Omega$. Recall that such a scenario is far from reality, but such a setup helps us study the worst case. Under this setup, the adversary can cause high disruption by remaining stealthy because the adversary will be able to inject attacks by solving the optimization problem
\begin{equation}\label{eqq4}
\begin{aligned}
 ||\tilde{\Sigma}(\delta)||_{\ell_{2e},y_p \leftarrow y_r}^2 \triangleq  \sup_{a_{\delta} \in \ell_{2e}}\; & X^A(a_{\delta},\delta),
\end{aligned}
\end{equation}
where $a_{\delta}$ represents the attack vector corresponding to the uncertainty $\delta$. Since the system operator does not know the uncertainty $\delta$, $||\tilde{\Sigma}(\delta)||_{\ell_{2e},y_p \leftarrow y_r}^2$ can be interpreted as a random variable. Thus, for the system operator, the best option is to assess the risk associated with the impact based on a risk metric. In this paper, we adopt the risk metric VaR \cite{mausser1999beyond}.
\begin{definition}[Value-at-Risk (VaR)]\label{def_Var}
Given a random variable $X$ and $\beta \in (0,1 )$, the VaR is defined as
\[
\text{VaR}_{\beta}(X) \triangleq \inf \{x|\mathbb{P}[X\leq x] \geq 1-\beta\}.
\]
With a specified probability level $\beta \in (0,1)$, $\text{VaR}_{\beta}$ is the lowest amount of $x$ such that with probability $1-\beta$, the random variable, $X$, does not exceed $x$. $\hfill\triangleleft$
\end{definition}

Therefore, by calculating $\text{VaR}_{\beta}$, one can ensure that the probability that the value (X) exceeds $\text{VaR}_{\beta}$ is less than or equal to $\beta$. In our setting, the system operator is interested in determining the $\text{VaR}_{\beta}$ given a small $\beta$ such that the impact rarely exceeds $\text{VaR}_{\beta}$. Given that the impact caused by the adversary on \eqref{system:CL:uncertain} is characterized by $||\tilde{\Sigma}(\delta)||_{\ell_{2e},y_p \leftarrow y_r}^2 $, the $\text{VaR}_{\beta}( \cdot )$ can be obtained, using \textit{Definition \ref{def_Var}}, by solving
\begin{equation}\label{uncertain:system}
\begin{aligned}
\gamma_{OA} \triangleq \inf_{\gamma}\; & \gamma\\
 \textrm{s.t.} \; & \mathbb{P}_{\Omega} ( ||\tilde{\Sigma}(\delta)||_{\ell_{2e},y_p \leftarrow y_r}^2 \leq \gamma ) \geq 1-\beta,
\end{aligned}
\end{equation}
where $\gamma_{OA}$ represents the VaR associated with the impact caused by an \textbf{O}mniscient \textbf{A}dversary.

Although VaR is not extensively used in the literature \cite{mausser1999beyond}, it is used here to only assess the worst-case risk even though this setup may be deemed unrealistic. Since $\mathbb{P}$ in \eqref{uncertain:system} operates over the continuous space $\Omega$, the optimization problem is computationally intensive or in general NP-hard \cite[Section 3]{ben1998robust}. Besides \eqref{eqq4} is a non-convex \cite{teixeira2015strategic}. In Section \ref{Robust o2o} we discuss a method to solve \eqref{uncertain:system} approximately and efficiently.
\section{Risk assessment}\label{Robust o2o}
To recall, \eqref{uncertain:system} is computationally intensive since $\Omega$ is a continuum. To this end, in this section we determine an approximate solution to \eqref{uncertain:system} and also provide some probabilistic certificates using the scenario approach introduced in \cite{calafiore2007probabilistic}.
\subsection{Discrete uncertainty set}
In this section, we consider a discrete uncertainty set $\Omega$. The results of this section is the basis for addressing a continuous uncertainty set next. For brevity, given a sampled uncertainty $\delta_i \in \Omega$, we define $\tilde{\Sigma}_{p,i} \triangleq ({A}_{cl,i}, {B}_{cl,i}, {C}_{p,i}, {D}_{p,i})$ and  $\tilde{\Sigma}_{r,i} \triangleq ({A}_{cl,i}, {B}_{cl,i}, {C}_{r,i}, {D}_{r,i})$ with $y_p(\delta_i)=y_{pi},y_r(\delta_i)=y_{ri}$ and $x(\delta_i)=x_{i}$ as the outputs and states of $\tilde{\Sigma}_{p,i}$ and  $\tilde{\Sigma}_{r,i}$ correspondingly. For such an isolated uncertainty, \eqref{eqq4} can be rewritten as
\begin{equation}\label{pp1}
\begin{aligned}
 ||\tilde{\Sigma}(\delta_i)||_{\ell_{2e},y_p \leftarrow y_r}^2 &\triangleq  \sup_{a_i \in \ell_{2e}}\; X^A(a_i,\delta_i)\\
 &= \left\{\begin{aligned}
  \sup_{a_i \in \ell_{2e}} & \Vert y_{p,i}\Vert_{\ell_2}^2\\
\textrm{s.t.} \; & \Vert y_{r,i}\Vert_{\ell_2}^2 \leq 1,x_i[\infty]=0
\end{aligned}\right\}
\end{aligned}
\end{equation}
where $a_i$ is the attack vector corresponding to the uncertainty $\delta_i$. The optimization problem \eqref{pp1} has two disadvantages: it is non-convex and intractable (since the optimizer is infinite-dimensional). To this end, we can use the Lagrange dual function to reformulate the non-convex problem into its dual-counterpart. Furthermore, we can use dissipative system theory to convert the intractable non-convex problem to a convex problem with LMI constraints which is tractable. This reformulation is presented in \textit{Lemma \ref{lemm1}}.
\begin{lemma}\label{lemm1}
The optimization problem \eqref{pp1} is equivalent to the convex SDP \eqref{s2}.
\begin{equation}\label{s2}
\min_{\gamma_i \geq 0 ,P_i = P_i^T} \; \left\{ \gamma_i \;  \Big| M (\gamma_i,\delta^i,P_i) \preceq 0 \right\}
\end{equation}
where $M (\gamma_i,\delta^i,P_i) \triangleq 
\begin{bmatrix} A_{cl,i}^TP_iA_{cl,i}-P_i & A_{cl,i}^TP_iB_{cl,i} \\ B_{cl,i}^TP_iA_{cl,i} & B_{cl,i}^TP_iB_{cl,i} \end{bmatrix} 
+ \begin{bmatrix}
C_{p,i}^T\\
D_{p,i}^T
\end{bmatrix}
\begin{bmatrix}
C_{p,i} & D_{p,i}
\end{bmatrix} - \gamma_i
\begin{bmatrix}
C_{r,i}^T\\
D_{r,i}^T
\end{bmatrix}
\begin{bmatrix}
C_{r,i} & D_{r,i}
\end{bmatrix}.$\\
\end{lemma}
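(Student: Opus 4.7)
The plan is to combine Lagrangian duality for the primal in \eqref{pp1} with classical discrete‑time dissipativity / KYP machinery. First, I would attach a multiplier $\gamma_i \geq 0$ to the detection constraint $\Vert y_{r,i}\Vert_{\ell_2}^2 \leq 1$ and form the Lagrangian $L(a_i,\gamma_i) = \Vert y_{p,i}\Vert_{\ell_2}^2 - \gamma_i(\Vert y_{r,i}\Vert_{\ell_2}^2 - 1)$, where the $x_i[0]=x_i[\infty]=0$ conditions are kept as hard state constraints through the dynamics. The dual function $g(\gamma_i) = \sup_{a_i \in \ell_{2e}} L(a_i,\gamma_i)$ is homogeneous of degree two in $a_i$ plus the constant $\gamma_i$, so it equals $+\infty$ unless $\Vert y_{p,i}\Vert_{\ell_2}^2 \leq \gamma_i \Vert y_{r,i}\Vert_{\ell_2}^2$ for every $a_i \in \ell_{2e}$ compatible with the boundary conditions, in which case the supremum is attained at $a_i=0$ and equals $\gamma_i$. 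Thus the Lagrange dual of \eqref{pp1} reads $\inf\{\gamma_i \geq 0 : \Vert y_{p,i}\Vert_{\ell_2}^2 \leq \gamma_i \Vert y_{r,i}\Vert_{\ell_2}^2 \text{ for all admissible } a_i\}$.

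Next, I would translate this $\ell_2$ inequality into the LMI using discrete‑time dissipativity. Choose the supply rate $s(x_i,a_i) \triangleq \gamma_i y_{r,i}^T y_{r,i} - y_{p,i}^T y_{p,i}$ and look for a quadratic storage function $V(x) = x^T P_i x$ with $P_i=P_i^T$ satisfying the pointwise dissipation inequality $V(x_i[k+1]) - V(x_i[k]) - s(x_i[k],a_i[k]) \leq 0$. Substituting the state‑space equations of $\tilde{\Sigma}_{p,i},\tilde{\Sigma}_{r,i}$ and collecting quadratic terms rewrites this as $[x_i[k]^T\;a_i[k]^T]\,M(\gamma_i,\delta^i,P_i)\,[x_i[k]^T\;a_i[k]^T]^T \leq 0$, which holds for all $(x_i[k],a_i[k])$ if and only if $M(\gamma_i,\delta^i,P_i)\preceq 0$. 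Summing the pointwise inequality over $k\in[0,\infty)$ and invoking $x_i[0]=x_i[\infty]=0$ shows sufficiency: any $(\gamma_i,P_i)$ feasible for \eqref{s2} gives an upper bound on \eqref{pp1}. For necessity I would construct $P_i$ from the available storage $V_a(x_0) \triangleq \sup_{a_i,N\geq 0}\sum_{k=0}^{N}(y_{p,i}^T y_{p,i} - \gamma_i y_{r,i}^T y_{r,i})$, which is a well‑defined quadratic form in $x_0$ whenever the $\ell_2$ inequality holds, because Assumption~\ref{assume_stable} together with linearity makes the supremum finite and quadratic.

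Strong duality is the only delicate point. Weak duality yields $\eqref{pp1}\leq \eqref{s2}$; to close the gap I would rely on the losslessness of the S‑procedure for a single quadratic constraint in a Hilbert space, combined with the fact that $a_i=0$ strictly satisfies $\Vert y_{r,i}\Vert_{\ell_2}^2<1$ (Slater). This is the standard route used for output‑to‑output gain characterisations in the reference~\cite{teixeira2015strategic}, which covers precisely the deterministic setting faced here once the uncertainty $\delta_i$ is frozen.

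The main obstacle, in my view, is the strong‑duality / losslessness step rather than the algebra: one must argue carefully that the infinite‑dimensional primal attains (or approaches) its supremum through admissible $a_i$ with $x_i[\infty]=0$, so that no duality gap is introduced by restricting trajectories to return to the origin. Once that is established, the remaining manipulations — forming $M(\gamma_i,\delta^i,P_i)$, substituting the $\tilde{\Sigma}_{p,i}$ and $\tilde{\Sigma}_{r,i}$ matrices, and rewriting the dissipation inequality as an LMI — are routine and deliver \eqref{s2} directly.
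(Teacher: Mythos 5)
Your route---Lagrange dual with multiplier $\gamma_i$, losslessness of the S-procedure for the single quadratic constraint (with $a_i=0$ as Slater point), and a quadratic storage/dissipativity argument converting the resulting $\ell_2$ inequality into $M(\gamma_i,\delta^i,P_i)\preceq 0$---is exactly the approach the paper relies on: it omits the proof and defers to \cite[Theorem~1]{teixeira2015strategic}, whose argument is the same duality-plus-dissipativity reasoning you sketch. Your proposal is therefore essentially correct and takes the same path, with the strong-duality/losslessness step (and the available-storage construction for necessity, which tacitly uses the standard reachability-type hypotheses of the nonstrict KYP lemma) correctly identified as the only delicate point.
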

\begin{proof}
The result is similar to \cite[Theorem 1]{teixeira2015strategic} and thus the proof is omitted.
\end{proof}
Next, we discuss the the conditions for boundedness of $||\tilde{\Sigma}(\delta_i)||_{\ell_{2e},y_p \leftarrow y_r}^2$ in the \textit{Lemma \ref{bound_OA_single}}.
\begin{lemma}[Boundedness]\label{bound_OA_single}
Consider the closed-loop system \eqref{system:CL:uncertain} with an uncertainty $\delta_i$ which is known to the adversary. Then, the optimal value of \eqref{s2} is bounded if and only if one of the following conditions hold:
\begin{enumerate}
    \item The system $\tilde{\Sigma}_{r,i}$ has no zeros on the unit circle.
    \item The zeros on the unit circle of the system $\tilde{\Sigma}_{r,i}$ (including multiplicity and input direction) are also zeros of $\tilde{\Sigma}_{p,i}$.
\end{enumerate}
\end{lemma}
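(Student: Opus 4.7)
The plan is to combine Lemma \ref{lemm1}, which identifies the optimum of \eqref{s2} with $\|\tilde{\Sigma}(\delta_i)\|_{\ell_{2e},y_p \leftarrow y_r}^2$, with a standard characterization of the output-to-output gain in terms of invariant zeros. By Assumption \ref{assume_stable}, $A_{cl,i}$ is Schur, so any $a_i \in \ell_2$ produces a trajectory that automatically satisfies $x_i[\infty]=0$, and $x_i[0]=0$ holds by Assumption \ref{equil_begin}. It therefore suffices to prove that $\sup_{a_i \in \ell_2}\{\|y_{p,i}\|_{\ell_2}^2 : \|y_{r,i}\|_{\ell_2}^2 \leq 1\}$ is finite if and only if condition (1) or (2) holds, equivalently, that a finite $\gamma$ exists with $\|y_{p,i}\|_{\ell_2}^2 \leq \gamma\, \|y_{r,i}\|_{\ell_2}^2$ for every admissible $a_i$.

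Sufficiency under condition (1) follows from a Parseval argument. Since $\tilde{\Sigma}_{r,i}$ is stable and has no unit-circle zeros, its smallest singular value on the unit circle is uniformly bounded below by some $\underline{\sigma}>0$, so that $\|y_{r,i}\|_{\ell_2}^2 \geq \underline{\sigma}^2\,\|a_i\|_{\ell_2}^2$. Combined with $\|y_{p,i}\|_{\ell_2}^2 \leq \|\tilde{\Sigma}_{p,i}\|_{\mathcal{H}_\infty}^2\,\|a_i\|_{\ell_2}^2$, which also follows from stability, this furnishes the desired $\gamma$. For condition (2), I will reduce to Case~1 by a Smith--McMillan-type factorization: at each shared unit-circle zero $z_0$ with input direction $v_0$ and multiplicity $\mu$, both $\tilde{\Sigma}_{r,i}(z)\, v_0$ and $\tilde{\Sigma}_{p,i}(z)\, v_0$ contain the common factor $(z-z_0)^\mu$, which cancels in the ratio; what remains is a transfer map that is uniformly bounded on the unit circle.

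For necessity, suppose neither (1) nor (2) holds, so there is a unit-circle zero $z_0 = e^{j\omega_0}$ of $\tilde{\Sigma}_{r,i}$ with input direction $v_0$ that is not matched by $\tilde{\Sigma}_{p,i}$ (either $\tilde{\Sigma}_{p,i}(z_0)\, v_0 \neq 0$, or its multiplicity in $\tilde{\Sigma}_{p,i}(z)\, v_0$ is strictly less). Consider the attacks $a_n[k] = \mathrm{Re}(v_0 z_0^k)$ for $k \in \{0,\dots,n-1\}$, and zero otherwise. By the zero condition the steady-state response of $\tilde{\Sigma}_{r,i}$ to this sinusoid is identically zero, so $y_{r,i}$ consists only of start-up and shut-down transients; Schur stability of $A_{cl,i}$ then yields a uniform bound $\|y_{r,i}\|_{\ell_2}^2 \leq C$ independent of $n$. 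The steady-state response of $\tilde{\Sigma}_{p,i}$, by contrast, is a nonzero sinusoid of amplitude $|\tilde{\Sigma}_{p,i}(z_0)\, v_0|$ sustained over the full window, giving $\|y_{p,i}\|_{\ell_2}^2 \geq c\,n$ for some $c>0$. Hence $\|y_{p,i}\|_{\ell_2}^2/\|y_{r,i}\|_{\ell_2}^2 \to \infty$, and after rescaling $a_n$ so that $\|y_{r,i}\|_{\ell_2}^2 \leq 1$, the objective $\|y_{p,i}\|_{\ell_2}^2$ diverges, contradicting boundedness of \eqref{s2}.

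The main obstacle is the sufficiency step under condition (2): cleanly isolating and cancelling the shared zero structure when $\tilde{\Sigma}_{p,i}$ and $\tilde{\Sigma}_{r,i}$ have different numbers of outputs and the zeros come with nontrivial input directions and multiplicities. I plan to handle this by working along each zero direction using a local Smith form, making the cancellation a pointwise matrix-valued statement, and then glue the local arguments into a global uniform bound. A minor but necessary technicality for the necessity construction is ensuring the transient contributions to $y_{r,i}$ at the start and end of the windowed sinusoid remain bounded independently of $n$, which is guaranteed by Schur stability of $A_{cl,i}$.
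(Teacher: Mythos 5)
Your overall route is genuinely different from the paper's: you argue in the signal domain via Lemma \ref{lemm1} (Parseval plus a uniform lower bound on $\sigma_{\min}$ of $\tilde{G}_{r,i}$ on the unit circle for sufficiency, and an explicit truncated-sinusoid attack along the zero direction for necessity), whereas the paper never leaves the dual SDP: it invokes the equivalence with a frequency-domain inequality, $\gamma_i\tilde{G}_{r,i}(\bar z)^T\tilde{G}_{r,i}(z)-\tilde{G}_{p,i}(\bar z)^T\tilde{G}_{p,i}(z)\succeq 0$ for all $|z|=1$, partitions the input directions according to whether $\tilde{G}_{r,i}(z)x$ and $\tilde{G}_{p,i}(z)x$ vanish, and reads off boundedness/infeasibility pointwise. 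Your necessity argument is fine for the case it treats (and is arguably more informative, since it exhibits the destabilizing attack rather than just showing the dual constraint is infeasible at the zero), although as written it only covers $\tilde{G}_{p,i}(z_0)v_0\neq 0$; the ``strictly smaller multiplicity'' case you mention would need polynomially modulated sinusoids, and there is a small technicality with taking real parts when $z_0=\pm 1$.

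The genuine gap is the sufficiency proof under condition (2), which you yourself flag as the main obstacle and leave as a plan rather than a proof. In your formulation the claim to be established is a \emph{uniform} bound on $\sup\|y_{p,i}\|_{\ell_2}^2/\|y_{r,i}\|_{\ell_2}^2$, and the whole difficulty sits exactly at the common unit-circle zeros: you must show that, as $z$ approaches a shared zero and the input direction approaches its kernel direction, $\tilde{G}_{p,i}(z)x$ vanishes at least as fast as $\tilde{G}_{r,i}(z)x$, uniformly, which is where the multiplicity and direction hypotheses enter; the local Smith-form-and-gluing argument you sketch is precisely the nontrivial content and is not carried out (different output dimensions, several zeros, and the passage from local pointwise cancellation to a global $\mathcal{H}_\infty$-type bound all need to be handled). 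Note that the paper's route largely sidesteps this: working directly with the frequency-domain inequality of \eqref{s2}, the problematic directions are those with $\tilde{G}_{r,i}(z)x=0$, and under condition (2) these satisfy $\tilde{G}_{p,i}(z)x=0$ as well, so the constraint holds there trivially and no cancellation machinery is invoked (the remaining supremum over directions where both are nonzero is asserted to be finite). If you want to complete your version, either import that dual/FDI viewpoint for case (2), or be prepared to do the Smith--McMillan bookkeeping in full; as it stands, the case-(2) half of sufficiency is missing.
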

\begin{proof}
See appendix.
\end{proof}

\textit{Lemma \ref{bound_OA_single}} states that the attack impact caused by an omniscient adversary on \eqref{system:CL:uncertain} with an uncertainty $\delta_i$ is bounded if either, there does not exist an attack vector which makes the output $y_r$ identically zero, or all attack vectors which yields $y_r$ identically $0$ also yields $y_p$ identically zero. 

In this section, we formulated the results on characterizing the attack impact by a convex SDP and the condition for its boundedness for an isolated discrete uncertainty. Next, the approach discussed in this section for risk assessment is extended when considering a continuum of uncertainties. 
\subsection{Continuous uncertainty set}
The optimization problem \eqref{s2} can be directly extended to solve \eqref{uncertain:system} only when we consider a discrete set $\Omega$. But \eqref{uncertain:system} that we intend to solve operates over a continuous set $\Omega$. Using the framework of scenario-based reliability estimation \cite{calafiore2007probabilistic}, $\Omega$ can be approximated with a finite set. With this scenario-based framework we revisit \eqref{uncertain:system} in \textit{Theorem \ref{Thm1}}. 
\begin{theorem}\label{Thm1}
Let $\epsilon_1 \in (0,1)$ represent the accuracy with which the probability operator $\mathbb{P}_{\Omega}$ in \eqref{uncertain:system} is to be approximated. Let $\beta_1 \in (0,1)$ represent the confidence with which the accuracy $\epsilon_1$ is guaranteed, i.e.,
\[ \mathbb{P}\{ \vert \mathbb{P}_{\Omega}(||\tilde{\Sigma}(\delta)||_{\ell_{2e},y_p \leftarrow y_r}^2 \leq \gamma ) - \hat{\mathbb{P}}_{N_1} \vert \geq\epsilon_1\} \leq \beta_1.\]
Here $ \hat{\mathbb{P}}_{N_1}$ represents the approximation of the probability operator $\mathbb{P}_{\Omega}$ in \eqref{uncertain:system} defined as
\begin{equation}
    \hat{\mathbb{P}}_{N_1} \triangleq \frac{1}{N_1}\sum_{i=1}^{N_1}\mathbb{I}\left( ||\tilde{\Sigma}(\delta_i)||_{\ell_{2e},y_p \leftarrow y_r}^2 \leq \gamma \right), \hfill \text{where}
\end{equation}
\begin{equation}\label{sam1}
N_1 \geq \frac{1}{2\epsilon_1^2}\text{log}\frac{2}{\beta_1}.
\end{equation}
Then, the $\text{VaR}_{\beta}$ defined in \eqref{uncertain:system} can be obtained with an accuracy $\epsilon_1$ and confidence $\beta_1$ by solving
\begin{equation}\label{problem:decoupled}
\hat{\gamma}_{OA} = \left\{\begin{aligned}
\min & \;\;\gamma\\
\textrm{s.t.} \quad & \frac{1}{N_1} \sum_{i=1}^{N_1} \mathbb{I}\left( \gamma_i \leq \gamma \right) \geq 1-\beta,
\end{aligned}\right\}
\end{equation}
where $\hat{\gamma}_{OA}$ represents the $\text{VaR}_{\beta}$  with an accuracy $\epsilon_1$, and $\mathbb{I}$ is the indicator function. The value of  $\gamma_i,i=1,\dots, N_1$ is obtained by solving the convex SDP \eqref{s2}.
\end{theorem}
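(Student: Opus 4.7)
The plan is to decompose the proof into two largely independent ingredients. First, I would replace the intractable probability $\mathbb{P}_\Omega$ in \eqref{uncertain:system}, which operates on a continuum, by an empirical average computed from i.i.d. samples drawn from $\Omega$; the sample-complexity side of the theorem is then a direct consequence of a concentration inequality. Second, I would appeal to \textit{Lemma \ref{lemm1}} to express the impact associated with each individual sample as the optimal value of the convex SDP \eqref{s2}, so that the empirical probability becomes the one appearing in \eqref{problem:decoupled}.

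For the first ingredient, fix $\gamma$ and set $Z_i \triangleq \mathbb{I}(\|\tilde{\Sigma}(\delta_i)\|_{\ell_{2e},y_p\leftarrow y_r}^2 \leq \gamma)$ for $\delta_1,\dots,\delta_{N_1}$ drawn i.i.d.\ from $\Omega$. Each $Z_i \in \{0,1\}$, and $\mathbb{E}[Z_i] = \mathbb{P}_\Omega(\|\tilde{\Sigma}(\delta)\|^2_{\ell_{2e},y_p\leftarrow y_r} \leq \gamma)$. Because $\hat{\mathbb{P}}_{N_1} = \frac{1}{N_1}\sum_{i=1}^{N_1} Z_i$ is an empirical average of bounded i.i.d.\ random variables, Hoeffding's inequality gives
\begin{equation}
\mathbb{P}\bigl\{|\hat{\mathbb{P}}_{N_1} - \mathbb{P}_\Omega| \geq \epsilon_1\bigr\} \leq 2\exp(-2 N_1 \epsilon_1^2).
\end{equation}
Imposing the right-hand side to be at most $\beta_1$ and solving for $N_1$ yields exactly \eqref{sam1}, which is the key scenario-based reliability estimate from \cite{calafiore2007probabilistic}.

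For the second ingredient, I would invoke \textit{Lemma \ref{lemm1}} sample by sample: for each realization $\delta_i$, the scalar $\|\tilde{\Sigma}(\delta_i)\|^2_{\ell_{2e},y_p\leftarrow y_r}$ coincides with the optimal value $\gamma_i$ of the SDP \eqref{s2}. Substituting this equality into the definition of $\hat{\mathbb{P}}_{N_1}$ replaces the infinite-dimensional indicator by $\mathbb{I}(\gamma_i \leq \gamma)$, and substituting $\hat{\mathbb{P}}_{N_1}$ for $\mathbb{P}_\Omega$ in \eqref{uncertain:system} produces the finite-dimensional program \eqref{problem:decoupled}. Composing the two ingredients shows that the minimizer $\hat{\gamma}_{OA}$ satisfies the VaR-type chance constraint empirically, and by the concentration bound the empirical constraint coincides with the original one up to accuracy $\epsilon_1$ with confidence at least $1-\beta_1$.

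The main delicate point I expect is the interplay between the inner supremum \eqref{eqq4}, the outer chance constraint, and the sampling: one must be careful that the claim of \textit{Theorem \ref{Thm1}} concerns the accuracy with which $\mathbb{P}_\Omega$ is estimated (not directly the gap between $\gamma_{OA}$ and $\hat{\gamma}_{OA}$), and that Hoeffding's bound is applied uniformly enough for the chosen $\gamma$. Beyond this, the argument is essentially a plug-and-play combination of dissipative-system duality (already packaged in \textit{Lemma \ref{lemm1}}) and a standard empirical-probability concentration inequality.
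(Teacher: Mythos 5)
Your proposal is correct and follows essentially the same route as the paper: the paper's proof sets $f(\delta)=\|\tilde{\Sigma}(\delta)\|^2_{\ell_{2e},y_p\leftarrow y_r}$, invokes the scenario-based probability estimation algorithm of \cite{calafiore2007probabilistic} (whose sample bound \eqref{sam1} is exactly the Hoeffding/Chernoff inequality you derive explicitly), and then uses \textit{Lemma \ref{lemm1}} to replace each sampled impact by the SDP value $\gamma_i$, yielding \eqref{problem:decoupled}. Your only departure is making the concentration inequality explicit rather than citing the packaged algorithm, which is a presentational difference, not a different argument.
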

\begin{proof}
See appendix.
\end{proof}

\textit{Theorem \ref{Thm1}} states that, to solve \eqref{problem:decoupled}, one could solve the optimization problem \eqref{s2} for $N_1$ unique realizations of the uncertainty. Since strong duality holds between \eqref{pp1} and \eqref{s2}, the optimal value of the dual optimization problem \eqref{problem:decoupled} indeed provides the VaR$_{\beta}$ with an accuracy $\epsilon_1$ and confidence $\beta_1$. Thus \textit{Theorem \ref{Thm1}} provides a method to determine the risk through a sampled uncertainty set, and provides apriori probabilistic certificates on the accuracy and confidence of the operator $\mathbb{P}$. Next, by extending \textit{Lemma \ref{bound_OA_single}}, the condition for boundedness of \eqref{problem:decoupled} is stated in \textit{Lemma \ref{bound_OA_multi}}. 
\begin{lemma}[Boundedness]\label{bound_OA_multi}
Consider $N_1$ independent and identically distributed realizations of \eqref{system:CL:uncertain}, each with an uncertainty $\delta_i$. The optimal value of \eqref{problem:decoupled} with these $N_1$ system realizations is bounded iff the optimal value of \eqref{s2} is bounded for at least $\left \lceil{N_1(1-\beta)}\right \rceil $ system realizations. 
\end{lemma}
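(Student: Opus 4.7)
The plan is to unpack the chance constraint in \eqref{problem:decoupled} as an order-statistic condition on the per-scenario optima $\gamma_i$ obtained from \eqref{s2}, and then apply \textit{Lemma \ref{bound_OA_single}} scenario-by-scenario. First I would note that, since the sum $\sum_{i=1}^{N_1}\mathbb{I}(\gamma_i\leq\gamma)$ is integer-valued, the constraint $\frac{1}{N_1}\sum_{i}\mathbb{I}(\gamma_i\leq\gamma)\geq 1-\beta$ is equivalent to $\sum_{i}\mathbb{I}(\gamma_i\leq\gamma)\geq \lceil N_1(1-\beta)\rceil$. Writing the ordered values $\gamma_{(1)}\leq \gamma_{(2)}\leq\cdots\leq\gamma_{(N_1)}$ (with $+\infty$ used as the optimal value whenever the corresponding \eqref{s2} is unbounded), this counting constraint is satisfied exactly when $\gamma\geq \gamma_{(k)}$ with $k\triangleq\lceil N_1(1-\beta)\rceil$. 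Hence the optimum of \eqref{problem:decoupled} is the $k$-th order statistic, $\hat{\gamma}_{OA}=\gamma_{(k)}$.

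Next I would conclude the equivalence in both directions. For sufficiency, if at least $k$ of the $N_1$ scenarios yield bounded $\gamma_i$, then $\gamma_{(k)}$ is finite and thus $\hat{\gamma}_{OA}=\gamma_{(k)}<\infty$. For necessity, if strictly fewer than $k$ scenarios yield a bounded $\gamma_i$, then $\gamma_{(k)}=+\infty$, so no finite $\gamma$ can satisfy the chance constraint and $\hat{\gamma}_{OA}=+\infty$, i.e., \eqref{problem:decoupled} is unbounded. Finally, by \textit{Lemma \ref{bound_OA_single}}, for each scenario $i$ the optimum $\gamma_i$ of \eqref{s2} is bounded if and only if the zero conditions on $\tilde{\Sigma}_{r,i}$ and $\tilde{\Sigma}_{p,i}$ stated there hold, so the count of bounded scenarios coincides with the count of scenarios satisfying those conditions.

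The argument is essentially a bookkeeping one; the only minor delicate points I would take care of are (i) the rounding-up of $N_1(1-\beta)$ needed because the indicator sum is integer-valued, and (ii) the convention that scenarios whose \eqref{s2} is unbounded contribute $\gamma_i=+\infty$ to the order statistic, so that they never satisfy $\gamma_i\leq \gamma$ for any finite $\gamma$. Given these, no nontrivial obstacle arises; in particular, independence of the samples plays no role in the boundedness statement itself (it only matters for the probabilistic certificate in \textit{Theorem \ref{Thm1}}). I would close by stating $\hat{\gamma}_{OA}=\gamma_{(k)}$ as a by-product, since it will be useful for implementation: one solves \eqref{s2} for each realization and returns the $k$-th smallest finite value, if it exists.
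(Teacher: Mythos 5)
Your proposal is correct: rewriting the chance constraint as the integer count $\sum_i\mathbb{I}(\gamma_i\leq\gamma)\geq\lceil N_1(1-\beta)\rceil$ and observing that a finite feasible $\gamma$ exists precisely when at least $\lceil N_1(1-\beta)\rceil$ of the per-scenario problems \eqref{s2} have bounded optima (treating unbounded scenarios as $\gamma_i=+\infty$) is exactly the counting argument underlying the paper's proof, with Lemma \ref{bound_OA_single} characterizing boundedness scenario-by-scenario. Your by-product observation that $\hat{\gamma}_{OA}$ equals the $\lceil N_1(1-\beta)\rceil$-th order statistic is consistent with how the paper computes the risk in the numerical section.
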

\begin{proof}
See appendix.
\end{proof}

The interpretation of \textit{Lemma \ref{bound_OA_multi}} is that the system operator tolerates a fraction ($\beta \times 100 \%$) of cases where the impact \eqref{s2} is unbounded. Conversely, even-though the impact is unbounded for certain realization of uncertainty, the risk will still be bounded. This allows the system operator to be less pessimistic: In the sense that, even though the attack impact in certain scenarios can be very high, the risk evaluated by the operator will be bounded. In the next section, we  briefly discuss the benefit of determining the risk and how it can be used by the system operator. 
\section{Optimal allocation of security measures}\label{sec_allocate}
In this section, we discuss how the RAP discussed in Section \ref{Robust o2o} can be used for the benefit of the operator. That is, after determining the risk, the operator might be interested in the question \say{If the risk value is not acceptable what actions steps can be taken?}. To this end, the determined risk can be used in two ways. On one hand, the operator can use the risk as a metric to design the controllers/detectors of the system optimally \cite{8618886}. On the other hand, the risk can be used to allocate the security measures optimally \cite[Chapter 5]{milovsevic2020security} which is the problem considered here. 

Let $n_w$ be the number of secure resources. In reality, secure resources refer to some form of secure communication channels for the sensors and actuators such that an attack cannot occur. If the number of secure resources ($n_w$) is equal to the number of actuators and sensors ($n_u+n_y$), then the SMAP is solved trivially. However, in general $n_w << n_u+n_y$ since secure communication channels are expensive\footnote{by expensive we here mean encryption and processing costs}. Thus, we discuss a method to optimally allocate the security measures when they are limitedly available. 

To formulate the problem, we define the following. The set of all sensors (actuators) is represented by $\mathcal{S} (\mathcal{A})$, where $|\mathcal{S}|=n_y (|\mathcal{A}|=n_u)$. The set of all vulnerabilities is represented by $\mathcal{V}$. Any sensor (actuator) is a vulnerability if the operator believes that an adversary might be able to access the sensor (actuator) channels. Thus $|\mathcal{V}| = n_v \leq n_y+n_u$. Let the set of secure resources be represented by $\mathcal{W}$ where $|\mathcal{W}| = n_w$. And as discussed before $n_w << n_v \leq n_y+n_u$. Then the SMAP has the following structure. 

Firstly, the operator aims at optimizing the risk metric. Secondly, if an actuator (sensor) $i \in \mathcal{V}$ is secure, then the corresponding actuator (sensor) channels cannot be accessed by the adversary (C$1$). Recall from Section \ref{description}, that the matrix $E_a (F_a)$ is a diagonal matrix with $E_a(i,i)=1 \;(F_a(i,i)=1)$, if the actuator (sensor) channel $i$ is under attack and zero otherwise. And, as discussed before, the operator can only secure $n_w$ actuators (sensors) at most (C$2$). To this end, the optimal SMAP under actuator attacks can be formulated as
\begin{equation}\label{allocate_P1}
    \left\{\hat{\gamma}_{OA}^*, W^* \right\}= \left\{\begin{aligned}
    \inf_{z_i} \;& \hat{\gamma}_{OA}(z)\\
    \text{s.t.}\; & (\text{C}1) E_a(i,i) = z_i,\;\forall i \in \mathcal{V}\\
    & (\text{C}2) \sum_{i \in \mathcal{V}}(1-E_a(i,i)) \leq n_w\\
    & z_i \in \{0,1\}, \;\forall i
    \end{aligned}\right\}
\end{equation}
where, $\hat{\gamma}_{OA}^*$ is the optimal risk after the security measures are allocated, the corresponding optimal vulnerabilities to be protected are represented by $W^*$, and where the constraint C$2$ considers that a vulnerable actuator is protected if and only if the corresponding actuator has $E_a(i,i)=0$. Similarly, when the sensors are under attack, the optimal SMAP can be formulated by replacing $E_a(\cdot)$ by $F_a(\cdot)$ in \eqref{allocate_P1}.

The optimization problem \eqref{allocate_P1} is hard to solve since it is a combinatorial problem. That is, the operator has to search through the whole set of $\mathcal{V}$ to secure $n_w$ vulnerabilities. And it is well known that combinatorial problems with a large search space ($\mathcal{V}$) are NP-hard in general. Thus, providing a heuristic to solve \eqref{allocate_P1} is beyond the scope of this paper and is left for future work. Interested readered are referred to \cite[Chapter 5]{milovsevic2020security}. However, we provide a scheme to solve \eqref{allocate_P1} when $|\mathcal{V}|$ is small in \textbf{Algorithm \ref{algo1}}.
\begin{algorithm}
\SetAlgoLined
\caption{Algorithm to solve SMAP}
\textbf{Initialization}: $\beta,\epsilon, \Omega, \epsilon,\mathcal{V},n_w$ an empty list $\gamma_{l}$\\
\textbf{Step 1:} Determine $\mathcal{Q}$ which is the set of all subsets of $\mathcal{V}$ with maximum cardinality $n_w$.\\
\textbf{Step 2:} For all $q_{(\cdot)} = \mathcal{Q}$, do:\\
\begin{enumerate}
        \item Set $E_a(i,i) = 0$ if $i \in q_{(\cdot)}$ and $1$ otherwise.
        \item Determine $\hat{\gamma}_{OA}$ with the new $E_a$.
        \item Append $\gamma_{l}$ with $\hat{\gamma}_{OA}$
\end{enumerate}
\textbf{Step 3:} Determine the minimum of $\gamma_{l}$ which is $\gamma_{OA}^*$.\\
\textbf{Step 4:} Determine the corresponding $q_{(\cdot)}^*$ which yields $\gamma_{OA}^*$.\\
\textbf{Step 5:} Set $W^* \triangleq q_{(\cdot)}^*$.\\
\KwResult{$\hat{\gamma}_{OA}^*,W^*$}
\label{algo1}
\end{algorithm}

The algorithm first determines all possible subsets of the vulnerabilities with the maximum cardinality of $n_w$. Then, the operator determines the maximum attack impact caused by the adversary when these various subsets of vulnerabilities are protected. It then selects the set of vulnerabilities ($W^*$) which yields the minimum attack impact ($\gamma_{OA}^*$).

In this section we discussed how the risk determined in Section \ref{Robust o2o} can be used by the system operator to optimally allocate the security measures. In the next section, we will illustrate what results can be obtained by a simple example.
\section{Numerical example}\label{Example}
In this section, the effectiveness of the proposed \textbf{Algorithm \ref{algo1}} is illustrated through a numerical example. Consider the system described in \eqref{system:CL:uncertain} with $C = C_J^T =I_3, E_a= I_2$
\begin{align}
\left[
\begin{array}{c|c}
A & B^{\Delta}
\end{array}
\right] = 
\left[ 
\begin{array}{c|c}
\begin{matrix}
1 & 0 & 1\\
1 & 0.5 & 0\\
0 & 1 & -0.5
\end{matrix}     & \begin{matrix}
1.5+\delta & 0 \\ 0.3 & 0 \\0 & 1
\end{matrix}
\end{array}
\right],
\end{align} 
$\Omega \triangleq [-0.5,\;0.5], A_e=A, B_e=B, C_e=C,\left[
\begin{array}{c|c}
D_c^T & K_e
\end{array}
\right]=$
\begin{align}
\left[ 
\begin{array}{c|c}
\begin{matrix}
   -0.066  &  0.178\\ 
   0.047 & 0.940 \\
    0.524 &  -1.346
\end{matrix}     & \begin{matrix}
    0.393&0&1\\
    1 &-0.048 & 0\\
    0 &1& -0.996
\end{matrix}
\end{array}
\right],
\end{align} 
and all the other unspecified matrices are zero. In the system description, only the matrix $B^{\Delta}$ is a function of the uncertain variable. And the system has no uncertain zeros on the unit circle, which makes the condition of \textit{Lemma \ref{lemm1}} hold $\forall \delta \in \Omega$. Thus, for a nominal system with $\delta=0$, the OOG obtained by solving \eqref{opti:o2o:uncertain} is $||\Sigma||_{\ell_{2e},y_p \leftarrow y_r}^2 = 197.76$.

Let $\epsilon_1 = 0.05, \beta_1 = \beta = 0.1$ and $N_1=235$ which satisfies \eqref{sam1}. Here $1-\epsilon_1$ represents the accuracy of the approximation of the probability operator in determining the VaR$_{\beta}$. And $\beta_1$ represents the guarantee. We then solve \eqref{problem:decoupled} and obtain $\gamma_{OA} = 347.15$. 

The optimization problem \eqref{problem:decoupled} is solved as follows. The set $\Omega$ is sampled for $N_1$ samples. The value of $\gamma_i \;\forall i=\{1,\dots,N_1\}$ is obtained by solving the SDP \eqref{s2}. From these values of $\gamma_i$, we choose $\gamma_{OA}$ such that the $\mathbb{P} (\gamma_i \geq \gamma_{OA}) = \beta$. Maintaining $\epsilon_1$ and $\beta_1$ constant, for varying values of $\beta$, the value of $\gamma_{OA}=\text{VaR}_{\beta}(X)$ is shown in Fig. \ref{fig2}.

\begin{figure}
    \centering
    \includegraphics[width=8.4cm]{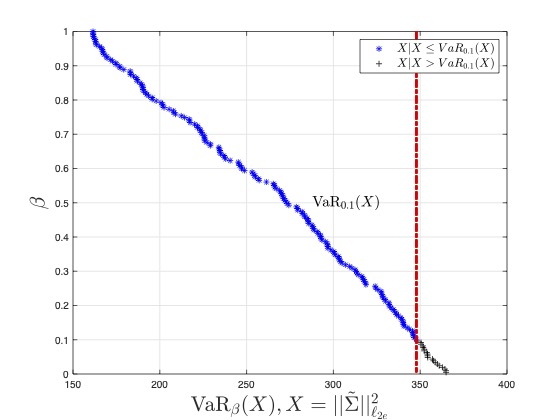}
    \caption{The parameter $\beta$ is shown on the Y-axis and the corresponding VaR$_{\beta}$ on the X-axis. The red line indicates VaR$_{0.1}$. The blue dots denotes the value of the impact random variable $X|X < VaR_{0.1}$, whereas the black dots denotes the value of the impact random variable $X|X > VaR_{0.1}$. It can be seen that the probability that $X > VaR_{0.1}$ is low when $\beta$ is small.}
    \label{fig2}
\end{figure}

Fig. \ref{fig2} depicts that the value of the impact is greater than the VaR with a probability $\beta$ and confidence $1-\epsilon_1$. To recall, by determining VaR$_{\beta}$ for a given $\beta$, the operator can ensure that the impact of any stealthy attack impact is greater than VaR$_{\beta}$ with probability $\beta$. After determining the VaR, the operator decides if the risk is acceptable or not. If the risk is not acceptable, the risk can be minimized for a given $\beta$ by implementing additional security measures.  

We next use the risk metric determined to allocate the security measures. Let $\mathcal{V} = \mathcal{S} = \{S1,S2,S3\}$. That is, we assume that all sensors communication channels are vulnerable to attacks. Then, we determine the risk when there are no security measures available ($|\mathcal{X}| = 0$). Next, we determine the risks when there is only one security measure available ($|\mathcal{X}| = 1$). That is, we determine the risks corresponding to the setup where either $S1,S2$ or $S3$ is protected. And finally, we determine the risks when there are two security measure available ($|\mathcal{X}| = 2$). That is, we determine the risk corresponding to the setup where either $\{S1, S2\},\{S2, S3\}$ or $\{S3,S1\}$ are protected. The risks are depicted in the left diagram of Fig. \ref{fig3} in blue, where the text on the top of the bar depicts the sensors that are protected. From Fig \ref{fig3}, we can conclude that $(i)$ when $|\mathcal{X}| = 1$, it is optimal to secure $S3$ since it minimizes the risk the most, and $(ii)$ when $|\mathcal{X}| = 2$, it is optimal to secure $S2,$ and $S3$.

We repeat the procedure for the actuators where $\mathcal{V} = \mathcal{A} = \{A1,A2\}$. That is, we assume that all actuator communication channels are vulnerable to attacks. The risks are depicted in the right diagram of Fig. \ref{fig3} in blue. From Fig. \ref{fig3}, we can conclude that $(i)$ when $|\mathcal{X}| = 1$, it is optimal to secure $A1$ (actuator 1), and $(ii)$ it is much more riskier to leave the sensors unprotected since the risk of unprotected sensors is higher than unprotected actuators. 

Finally, we show the advantage of using the risk metric. We use the impact on the nominal system as a metric to allocate the protection resources. That is, instead of solving the SMAP with the risk determined from \eqref{problem:decoupled}, we simply solve the optimization problem \eqref{s2} for the nominal system and use it as a metric to allocate the security measure. To this end, we determine the impact on the nominal system when  $|\mathcal{X}| = 0, |\mathcal{X}| = 1$, and $|\mathcal{X}| = 2$. The corresponding impact are shown in Fig. \ref{fig3} in red. It can be seen that the conclusion that we drew using the risk metric are not reproducible when we use the nominal impact as a metric. For instance, when the sensors are under attack and $|\mathcal{X}| = 1$, the conclusion from the risk metric is to protect $S3$, whereas if we use the nominal impact, we end up protecting $S2$. Similarly, when the actuators are under attack and $|\mathcal{X}| = 1$, the conclusion from the risk metric is to protect $A1$, whereas if we use the nominal impact, we end up protecting $A2$. 

\begin{figure}
    \centering
    \includegraphics[width=8.4cm]{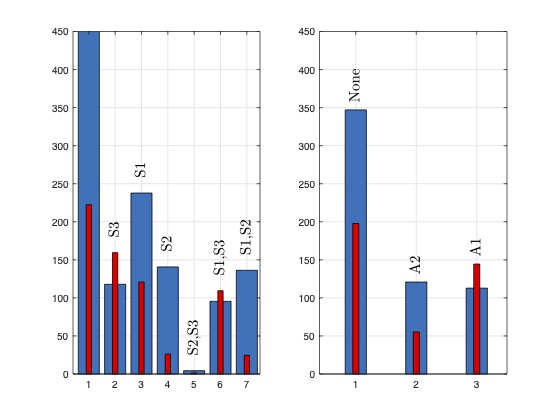}
    \caption{The VaR$_{0.1}$ after protecting various combination of sensors (actuators) are depicted on the left (right) figure in blue. The text on the top of each bar denotes the sensor (actuator) that is protected. For instance, ``None" represents that none of the sensors are protected. The bar at position ``1" of the figure in left corresponds to the risk when none of the sensors are protected and the corresponding risk was found to be 9081.4. The Y axis of the figure is not extended to show this value since it would affect clarity of the figure. The plots in red represent the impact on the nominal system after the security measures are allocated using the impact on the nominal system as a metric.
    }
    \label{fig3}
\end{figure}

\section{Conclusion}\label{Conclusion}
In this paper, we first addressed the RAP of false data attacks injected by an omniscient adversary on uncertain control systems. We formulated the RAP and observed that it is a non-convex infinite robust optimization problem. Using the theory of dissipative systems and scenario approach, we approximated the RAP as a convex SDP with probabilistic certificates. The necessary and sufficient conditions for the risk to be bounded were also formulated. Secondly, we consider the optimal SMAP. We used the risk determined as a metric to formulate the SMAP. We provide a preliminary algorithm to solve the allocation problem. The results were depicted through a numerical example. Future works include $(i)$ considering a process with process and measurement noise, and $(ii)$ providing a more detailed solution for the allocation problem. 

\appendix
\renewcommand{\thetheorem}{A.\arabic{theorem}}
\setcounter{theorem}{0}
\section*{Proof of \textit{Lemma \ref{bound_OA_single}}}    
\begin{proof}\label{bound_1}
To recall, the optimization problem \eqref{s2} was formulated using $3)$ in \cite[Proposition 2]{Truncated} where $y_1 = \sqrt{\gamma} y_r$ and $y_2 = y_p$. Due to the equivalency between $3)$ and $4)$ of \cite[Proposition 2]{Truncated}, the Frequency Domain Inequality (FDI) $G_1(\Bar{z})^TG_1(z)-G_2(\Bar{z})^TG_2(z) \succeq 0$ should hold $\;\forall \;|z|=1$. Since we know that $y_1 = \sqrt{\gamma} y_r$ and $y_2 = y_p$, we can deduce that $G_1(z)$ corresponds to $\sqrt{\gamma}\tilde{G}_{r,i}(z)$ and $G_2(z)$ to $\tilde{G}_{p,i}(z)$ in FDI, where $\tilde{G}_{r,i}(z_1) \triangleq C_{r,i}(z_1I-A_{cl,i})^{-1}B_{cl,i}+D_{r,i}$ and $\tilde{G}_{p,i}(z_1) \triangleq C_{p,i}(z_1I-A_{cl,i})^{-1}B_{cl,i}+D_{p,i}$. Thus, \eqref{s2} can be rewritten as
\begin{equation} \label{fi2}
\inf_{\gamma_i \geq 0}\left\{\gamma_i \Bigg| \underbrace{\gamma_i\tilde{G}_{r,i}(\bar{z})^T\tilde{G}_{r,i}({z})-\tilde{G}_{p,i}^T(\bar{z})\tilde{G}_{p,i}({z})}_{H(z,\gamma_i)}\succeq 0,\forall|z|=1\right\}
\end{equation}
which is equivalent to
\begin{equation}\label{lem_s1}
\inf_{\gamma_i\geq0}\left\{\gamma_i \Bigg| x^HH(z,\gamma_i)x \geq 0,\forall|z|=1\right\}
\end{equation}
Next, let us define the following sets
\begin{align}
    \mathcal{Z}_{pr} & \triangleq \{x \in \mathbb{C}^{n_a} \;|\; \tilde{G}_{r,i}({z})x = 0 , \tilde{G}_{p,i}({z})x = 0\},\\
    \mathcal{Z} &\triangleq \{x \in \mathbb{C}^{n_a} \;|\; \tilde{G}_{r,i}({z})x \neq 0 , \tilde{G}_{p,i}({z})x \neq 0\},\\
    \mathcal{Z}_r & \triangleq \{x \in \mathbb{C}^{n_a} \;|\; \tilde{G}_{r,i}({z})x = 0 , \tilde{G}_{p,i}({z})x \neq 0\},\\
    \mathcal{Z}_p & \triangleq \{x \in \mathbb{C}^{n_a} \;|\; \tilde{G}_{r,i}({z})x \neq 0 , \tilde{G}_{p,i}({z})x =0\}.
\end{align}
In the above definitions, each set corresponds to a combination of two logical conditions of $\tilde{G}_{r,i}({z})x$ and $\tilde{G}_{p,i}({z})x$. Therefore, the union of all four sets explores all possible combinations of the two logical conditions, and thus their union corresponds to the entire set $\mathbb{C}^{n_a}$. 

For any given $z$ such that $|z|=1$ , if $ x \in \mathcal{Z}_p$, choosing $\gamma =0$ satisfies the constraint of \eqref{lem_s1}. Similarly, if $ x \in \mathcal{Z}$, then $\gamma = \sup_{|z|=1, x \in \mathcal{Z}} \frac{x^H\big\{\tilde{G}_{r,i}^T(\bar{z})\tilde{G}_{r,i}({z})\big\}x}{x^H\big\{\tilde{G}_{p,i}^T(\bar{z})\tilde{G}_{p,i}({z})\big\}x}$. This ratio is well defined since the denominator cannot become zero (since $x \in \mathcal{Z}$), and the ratio is bounded since we have assumed that the transfer functions $\tilde{G}_{r,i}({z})$ and $\tilde{G}_{p,i}({z})$ are always stable (\textit{Assumption \ref{assume_stable}}). Therefore, we have proven that the value of \eqref{lem_s1} is bounded whenever $x \in \mathcal{Z}_{p} \cup \mathcal{Z}$. We next begin by proving that the lemma statements are sufficient for \eqref{lem_s1} to be bounded whenever $x \in \mathcal{Z}_{r} \cup \mathcal{Z}_{pr}$.

\textit{Sufficiency:} 
Assume that condition $(1)$ of the lemma statement holds. By definition of a zero $ \forall |z|=1, \nexists s \neq 0 \in \mathbb{C}^{n_a}\;\text{s.t.}\;\tilde{G}_{r,i}({z})s = 0$. Thus it follows that $\mathcal{Z}_r = \mathcal{Z}_{pr} = \emptyset$.  

Assume that condition $(2)$ of the lemma statement holds. Then by definition of a zero $ \forall |z|=1, \nexists s \neq 0$ such that $\tilde{G}_{r,i}({z})s = 0$ and $\tilde{G}_{p,i}({z})s \neq 0$. Thus it follows that $\mathcal{Z}_r = \emptyset$. And if $x \in \mathcal{Z}_{pr}$, then picking $\gamma =0$ simply satisfies the constraint of \eqref{lem_s1}. This concludes the proof on sufficiency.\\ 
\textit{Necessity:} We prove by contradiction. Assume that there exists a bounded $\gamma$ which solves the optimization problem \eqref{lem_s1}. And we also assume that there exists a complex number $z_1$ on the unit circle which is a zero of the system $\tilde{\Sigma}_{r,i}$ (including multiplicity and input direction) but are not zeros of $\tilde{\Sigma}_{p,i}$. By definition of a zero, it holds that $\exists s \neq 0$ such that $\tilde{G}_{r,i}(z_1)s=0,\;\tilde{G}_{p,i}(z_1)s \neq 0$. Thus, $\mathcal{Z}_{rp} \neq \emptyset$ and becomes a part of the feasible set for $x$. Then, if $z=z_1$ and $x = s$, the constraint of \eqref{lem_s1} can be rewritten as $-s^H\tilde{G}_{p,i}^T(\bar{z}_1)\tilde{G}_{p,i}({z_1})s \geq 0$ which cannot hold since $\tilde{G}_{p,i}({z_1})s \neq 0$. That is, the feasibility set of \eqref{lem_s1} is empty which contradicts our assumption. In terms of the primal problem \eqref{s2}, it means that there cannot exist a bound to its optimal value. This concludes the proof.
\end{proof}
\section*{Proof of \textit{Theorem \ref{Thm1}}} \label{App1}     
Before presenting the proof, an introduction to scenario-based performance level estimation \cite{calafiore2007probabilistic} is provided.
\subsection*{Scenario-based performance level estimation}\label{thm1_sample}
Consider a function $f(\delta),\; \delta \in \Delta$. Let $\epsilon \in (0,1)$, then the performance level estimation problem aims to estimate a performance level $\gamma^*$ with a reliability $1-\epsilon$ such that 
\begin{equation}\label{appro-reli}
\gamma^* \triangleq  \inf \left\{\gamma\; \Big|\; \mathbb{P}_{\Delta}\{f(\delta) \leq \gamma\} \geq 1-\epsilon.\right\}
\end{equation}
The problem \eqref{appro-reli} is computationally intensive. The scenario-based probability estimation algorithm (PEA), provides a randomized approach to approximate the probability operator $\mathbb{P}$. This algorithm is described below.
\begin{enumerate}
    \item Choose a confidence level $\beta_1 \in (0,1)$, and an accuracy level $\epsilon_1 \in (0,1)$.
    \item Choose $N_1$ such that \eqref{sam1} holds.
 \item Extract $N_1$ i.i.d. samples $\delta^1,\dots,\delta^N$ from $\Delta$ and evaluate $f(\delta^1),\dots,f(\delta^{N_1})$.
\item Then it holds that \[ \mathbb{P}\{ \vert \mathbb{P}_{\Delta}\{f(\delta) \leq \gamma\} - \hat{\mathbb{P}}_N \vert \geq\epsilon_1\} \leq \beta_1,\]
where $\hat{\mathbb{P}}_N \triangleq \frac{1}{N_1} \sum_{i=1}^{N_1} \mathbb{I}\left( f(\delta^i) \leq \gamma \right).$
Thus, in this step, an equivalent probability operator $ \hat{\mathbb{P}}_N$ is formulated in place of $\mathbb{P}_{\Delta}$ which was difficult to compute. 
 \item Using the approximation in $(4)$, compute the solution to \eqref{appro-reli} with an accuracy $\epsilon_1$ and a confidence $\beta_1$ as
  \begin{equation}\label{approx-reliability}
\min \left\{\gamma\;\Bigg|\; \frac{1}{N_1} \sum_{i=1}^{N_1} \mathbb{I}\left( f(\delta^i) \leq \gamma \right) \geq 1-\epsilon\right\}
 \end{equation}
\end{enumerate}
\begin{proof}
Let $f(\delta) \triangleq  ||\tilde{\Sigma}(\delta)||_{\ell_{2e},y_p \leftarrow y_r}^2$. If we substitute this definition in \eqref{appro-reli}, we obtain \eqref{uncertain:system} which solves for $\gamma_{OA}$. We use scenario based PEA to approximately compute $\gamma_{OA}$. To this end, let us define $\beta_1 \in (0,1)$ and $\epsilon_1 \in (0,1)$. Using these definitions, from step $(5)$ of PEA, we obtain $\gamma_{OA} $ with an accuracy $\epsilon_1$ and confidence $\beta_1$ by solving
\begin{equation*}
\min_{\gamma} \left\{ \gamma \Bigg| \frac{1}{N_1} \sum_{i=1}^{N_1} \mathbb{I}\left( ||\tilde{\Sigma}(\delta_i)||_{\ell_{2e},y_p \leftarrow y_r}^2 \leq \gamma \right) \geq 1-\beta\right\}
\end{equation*}
where $N_1$ is given by \eqref{sam1}.  Using the result of \textit{Lemma \ref{lemm1}}, $||\tilde{\Sigma}(\delta_i)||_{\ell_{2e},y_p \leftarrow y_r}^2$ can be obtained by solving the convex SDP \eqref{s2}. This concludes the proof.
\end{proof}
\section*{Proof of \textit{Lemma \ref{bound_OA_multi}}}\label{App3}
\begin{proof}
Necessity: Multiply both sides of the constraint of \eqref{problem:decoupled} by $N_1$. Then, for a bounded $\gamma_{OA}$ to exist, we require $\sum_{i=1}^{N_1} \mathbb{I}\left( \gamma_i \leq \gamma_{OA} , \; \forall a_i \in \ell_{2e}\right) \geq N_1-N_1\beta,$ should hold with a bounded $\gamma_{OA}$. This is satisfied only if $\lceil N_1(1-\beta)\rceil$ values of the set $\{\gamma_i\}_{i=1,\dots,N_1}$ are bounded.\\
Sufficiency: Assume that the values of the set $\{\gamma_i\}_{i=1,\dots,N_1}$, for the first $\left \lceil N_1(1-\beta)\right \rceil$ realizations is bounded and is unbounded for the other realizations. Let us choose $\gamma_{OA} = \max_{i=1,\dots,\lceil N_1(1-\beta) \rceil} \gamma_i$. Substituting the definition of $\gamma_i, \gamma_{OA}$ in the constraint of \eqref{problem:decoupled} yields
\begin{align*}
&\frac{1}{N_1}\left\{\sum_{i=1}^{\lceil N_1(1-\beta)\rceil} \mathbb{I}\left( \gamma_i \leq \gamma_{OA} \right) + \sum_{i=\lceil N_1(1-\beta)\rceil +1}^{N_1} \mathbb{I}\left( \gamma_i \leq \gamma_{OA} \right) \right\} \\
&=\frac{1}{N_1}\left\{ \lceil N_1(1-\beta) \rceil + 0 \right\} \geq 1-\beta.
\end{align*}
Thus, there exists a bounded $\gamma_{OA}$ which satisfies the constraint of \eqref{problem:decoupled}. This concludes the proof.
\end{proof}

\bibliographystyle{ieeetr}
\bibliography{ifacconf}

\clearpage
\end{document}